\newcommand{\commentout}[1]{}
\newcommand{\QED}{\hbox{\hskip 4pt \vrule width 5pt height 6pt depth
    1.5pt\hskip 2pt}}
\newfont{\ssbb}{cmssbx10 scaled\magstep2}
\newfont{\ssb}{cmssbx10 scaled\magstep1}
\newfont{\stb}{cmssbx10}
\newfont{\teniu}{cmu10}
\begin{document}

\title{Dictionary Matching with One Gap\thanks{This research was supported by the Kabarnit Cyber consortium funded by the Chief Scientist in the Israeli Ministry of Economy under the Magnet Program.}}
\author{
Amihood Amir\inst{1}\fnmsep \inst{2}\fnmsep  \thanks{Partly supported by  NSF grant CCR-09-04581, ISF grant 347/09, and BSF grant 2008217.}
\and Avivit Levy\inst{3}
\and Ely Porat\inst{1}
\and B.~Riva Shalom\inst{3}}
\institute{Department of Computer Science, Bar-Ilan University,
Ramat-Gan 52900, Israel. \email{E-mail: \{amir, porately\}@cs.biu.ac.il} \and
Department of Computer Science, Johns Hopkins University,
Baltimore, MD 21218.
\and  Department of Software Engineering,
Shenkar College, Ramat-Gan 52526, Israel. \email {Email: \{avivitlevy,
rivash\}@shenkar.ac.il}}

\maketitle

\begin{abstract}
The dictionary matching with gaps problem is to preprocess a dictionary $D$ of $d$ gapped patterns $P_1,\ldots,P_d$ over alphabet $\Sigma$, where each gapped pattern $P_i$ is a sequence of subpatterns separated by bounded sequences of don't cares. Then, given a query text $T$ of length $n$ over alphabet $\Sigma$, the goal is to output all locations in $T$ in which a pattern $P_i\in D$, $1\leq i\leq d$, ends. There is a renewed current interest in the gapped matching problem stemming from cyber security. In this paper we solve the problem where all patterns in the dictionary have one gap with at least $\alpha$ and at most $\beta$ don't cares, where $\alpha$ and $\beta$ are given parameters. Specifically, we show that the dictionary matching with a single gap problem can be solved in either $O(d\log d + |D|)$ time and $O(d\log^{\varepsilon} d + |D|)$ space, and query time $O(n(\beta -\alpha )\log\log d \log ^2 \min \{ d, \log |D| \} + occ)$, where $occ$ is the number of patterns found, or preprocessing time and space: $O(d^2 + |D|)$, and query time $O(n(\beta -\alpha ) + occ)$, where $occ$ is the number of patterns found. As far as we know, this is the best solution for this setting of the problem, where many overlaps may exist in the dictionary.
\end{abstract}

\section{Introduction}\label{s:introduction}
Pattern matching has been historically one of the key areas of
computer science. It contributed many important algorithms and data
structures that made their way to textbooks, but its strength is that
it has been contributing to applied areas, from text searching and web
searching, through computational biology and to cyber security.
One of the important variants of pattern matching is pattern matching
with variable length gaps. The problem is formally defined below.

\begin{definition}\label{d:gappedpattern}
  {\em Gapped Pattern }\\
A Gapped Pattern is a pattern $P$ of the form  $ p_1\ \{ \alpha
_{1}, \beta_{1} \}\ p_{2}\  \{ \alpha _{2}, \beta_{2} \}
  \ldots \  \{ \alpha _{k-1}, \beta_{k-1} \} \  p_{k}$, \\
    where each subpattern $p_{j}$, $1 \leq j \leq k$ is a string over
alphabet $\Sigma$, and\\
  $ \{ \alpha _{j}, \beta_{j} \}$ refers to a
 sequence of at least $\alpha _{j}$  and at most
$\beta_{j}$ don't cares \\
between the subpatterns $P_{j}$ and $P_{ j+1}$.\\
\end{definition}

\begin{definition}\label{d:gappedmatch}
  {\em The Gapped Pattern Matching Problem}:\\
\begin{tabular}{ll}
 Input: &  A text $T$ of length $n$, and a gapped pattern $P$ over
 alphabet $\Sigma$\\
  Output: & All locations in $T$, where the pattern $P$ ends.\\
\end{tabular}
\end{definition}

The problem arose a few decades ago by real needs in computational
biology applications~\cite{mm-93,fp-08,nr-03,hpfb-99}.
For example, the PROSITE database~\cite{hpfb-99} supports queries for
proteins specified by gaps.

The problem has been well researched and many solutions proposed. The
first type of solutions~\cite{bt-09,myers-92,gr-04} consider the
problem as a special case of regular expression. The best time
achieved using this method is $O(n(B|\Sigma|+m))$, where $n$ is the
text length, $B=\sum_{i=1}^k \beta_i$ (the sum of the upper
bounds of the gaps), and $m=\sum_{i=1}^k p_i$ (the length of the
non-gapped part of the pattern).

Naturally, a direct solution of the gapped pattern matching problem
should have a better time complexity. Indeed, such a solution
exists~\cite{bt-10} whose time is essentially $O(nk)$.

A further improvement~\cite{mpvz-05,rilms:06,bgvw:12} analyses the
time as a function of $socc$, which is the number of times all
segments $p_i$ of the gapped pattern appear in the text. Clearly $socc
\leq nk$.
\commentout{
\subsubsection{Problem Definition.}
\begin{definition}\label{d:gappedpattern}
  {\em Gapped Pattern }\\
A Gapped Pattern is a pattern $P$ of the form $P_1\phi_{\alpha
_{1}}^{\beta_{1}}P_{2}\phi_{\alpha _{2}}^{\beta_{2}}\ldots \phi_{\alpha_{k}}^{\beta_{k}}P_{k+1}$, where each subpattern $P_{j}$, $1\leq j\leq k+1$ is a string over alphabet $\Sigma$, and $\phi_{\alpha_{j}}^{\beta_{j}}$, $1\leq j\leq k$, is a sequence of at least $\alpha_{j}$ and at most
$\beta_{j}$ don't cares.
\end{definition}
\begin{definition}\label{d:gappedDictionary}
  {\em The Dictionary Matching with Gaps ($DMG$) Problem} is the following:\\
\begin{tabular}{ll}
 Input: &  A text $T$ of length $n$ over alphabet $\Sigma$,\\
 & a dictionary $D$ of $d$ gapped patterns $P_1,\ldots,P_d$ over alphabet $\Sigma$.\\
  Output: & All locations in $T$ in which a pattern $P_i\in D$, $1\leq i\leq d$, ends.
\end{tabular}\\
For the gapped pattern $P_i\in D$ we denote:
$$P_i=P_{i,1}\phi_{\alpha_{1}}^{\beta_{1}}P_{i,2}\phi_{\alpha _{2}}^{\beta_{2}}\ldots \phi_{\alpha_{k}}^{\beta_{k}}P_{i,k+1}.$$
\end{definition}

\subsubsection{Previous Work.}\label{s:previouswork}
The problem of pattern matching of a single gapped pattern has been studied and several solutions were suggested.}

Rahman et al. \cite{rilms:06} suggest two algorithms for this problem. In the first, they build an Aho-Corasick \cite{AC-75} pattern matching machine from all the subpatterns and use it to go over the text. Validation of subpatterns  appearances  with respect to the limits of the gaps   is performed using binary search over previous subpatterns locations. Their second algorithm  uses a suffix array built over the text to locate occurrences of all subpatterns. For the validation of occurrences of subpattern, they use Van Emde Boas data structure~\cite{VanEmde}
containing ending positions of  previous occurrences. In order to report occurrences of the gapped pattern in both algorithms, they build a graph representing legal appearances of consecutive subpatterns. Traversing the graph yields all possible appearances of the pattern.

Their first  algorithm works in time $O(n + m + socc \log (\max_j gap_j))$ where $m$ is the length of the pattern (not including
the gaps), $socc $ is the total number of occurrences of the subpatterns in the text and $gap_j = \beta _j - \alpha _j $.
The time requirements of their second algorithm is $O( n + m + socc \log \log n)$ where $n$ is the length of the text, $m$ is the length of the pattern (not including the gaps) and $socc$ is the total number of occurrences of the subpatterns in the text.
The DFS traversal on the subpatterns occurrences graph, reporting all the occurrences is done in $O(k \cdot occ)$ where $occ$ is the
number of occurrences of the complete pattern $P$ in the text.

Bille et al.~\cite{bgvw:12} also consider string matching with variable length gaps. They present an algorithm using sorted lists of disjoint intervals, after traversing the text with Aho-Corasick automaton. Their time complexity is $O(n\log k+m+socc)$ and space $O(m+A)$, where $A$ is the sum of the lower bounds of the lengths of the gaps in the pattern $P$ and $socc$ is the total number of occurrences of the substrings in $P$ within $T$.

Kucherov and Rusinowitch \cite{KR:97} and Zhang et al.~\cite{ZZH:10} solved the problem of matching a set of patterns with variable length of don't cares. They considered the question of determining whether one of the patterns of the set matches the text and report a leftmost occurrence of a pattern if there exists one. The algorithm of~\cite{KR:97} has run time of $O((|t| + |D|) \log |P|)$, where $ |D|$ is the total length of keywords in every pattern of the dictionary $D$. The algorithm of~\cite{ZZH:10} takes $O((|t| + dk )\log dis/ \log  \log dis)$ time, where $dk$ is the total number of keywords in every pattern of $P$, and $dis$ is the number of distinct keywords in $D$.

There is a renewed current interest in the gapped matching problem
stemming from a crucial modern concern - cyber security. Network
intrusion detection systems perform protocol analysis, content
searching and content matching, in order to detect harmful
software. Such malware may appear on several packets, and thus the
need for gapped matching~\cite{sl-08}. However, the problem becomes
more complex since there is a large list of such gapped patterns that
all need to be detected. This list is called a {\em dictionary}. Dictionary matching has been amply researched in
computer science (see e.g.~\cite{AC-75,\commentout{a:f:IPL:92,}AFILS-92-journal,\commentout{Idury-Schaeffer-92:cpm,}amir:f:g:g:park:93,BG:96,aklllr00,a:calinescu:00,cgl-04}). We are concerned with a new dictionary matching paradigm - {\em dictionary matching with gaps}. Formally:

\begin{definition}\label{d:gappedDictionary}
  {\em The Dictionary Matching with gaps ($DMG$) Problem}:\\
\begin{tabular}{ll}
 Input: &  A text $T$ of length $n$ over alphabet
$\Sigma$ and a dictionary $D$ over \\
&  alphabet $\Sigma$ consisting of $d$
gapped patterns  $P_1, \ldots, P_d$.\\
  Output: & All locations in $T$, where a
pattern $P_i$, for $1 \leq i \leq d$ ends. \\
\end{tabular}
\end{definition}

The $DMG$ problem has not been sufficiently studied yet. Haapasalo et al.~\cite{hsss:11} give an on-line algorithm for the general problem. Their algorithm is based on locating ``keywords'' of the patterns in the input text, that is, maximal substrings of the patterns that contain only input characters. Matches of prefixes of patterns are collected from the keyword matches, and when a prefix constituting a complete pattern is found, a match is reported. In collecting these partial matches they avoid locating those keyword occurrences that cannot participate in any prefix of a pattern found thus far. Their experiments show that this algorithm scales up well, when the number of patterns increases. They report at most one occurrence for each pattern at each text position. The time required for their algorithm is $O(n\cdot SUF + occ\cdot PREF)$, where $n$ is the size of the text, $SUF$ is the maximal number of suffixes of a keyword that are also keywords, and $PREF$ denotes the number of occurrences in the text of pattern prefixes ending with a keyword.

Nevertheless, more research on this problem is needed. First, in many applications it is necessary to report all patterns appearances. Moreover, as far as we know \cite{VerInt}, these Aho-Corasick automaton based methods fail when applied to real data security, which contain many overlaps in the dictionary patterns, due to overhead in the computation when run on several ports in parallel. Therefore, other methods should be developed and combined with the existing ones in order to design efficient practical solutions.

\subsubsection{Results.}
In this paper, we indeed suggest other directions for solving the problem. We focus on the $DMG$ problem where the gapped patterns in the dictionary $D$ have only a single gap, i.e.,\ we consider the case of $k=1$ implying each pattern $P_i$ consists of two subpatterns $P_{i,1}, P_{i,2}$. In addition, we consider the same gaps limits, $\alpha$ and $\beta$,
apply to all patterns in $P_i\in D$, $1\leq i\leq d$. We prove:
\begin{theorem}\label{t:single}
The dictionary matching with a single gap problem can be
solved in:
\begin{enumerate}
  \item Preprocessing time: $O(d\log d + |D|)$.\\
  Space:  $O(d\log ^{\varepsilon}d + |D|)$, for arbitrary small $\varepsilon$.\\
  Query time: $O(n(\beta -\alpha )\log\log d \log ^2 \min \{ d, \log |D| \} + occ)$, where $occ$ is the number of patterns found.
  \item Preprocessing time: $O(d^2 + |D|)$.\\
  Space: $O(d^2 + |D|)$.\\
Query time: $O(n(\beta -\alpha ) + occ)$, where $occ$ is the number of patterns found.
\end{enumerate}
\end{theorem}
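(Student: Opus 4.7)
My plan is to attack both variants of the theorem with the same two-phase scheme, separating them only at the choice of gap-validation data structures. In the preprocessing phase I would build a single Aho--Corasick (AC) automaton over the set $\mathcal S = \{P_{i,1},P_{i,2}: 1 \le i \le d\}$ of all subpatterns, which takes $O(|D|)$ time and space, and annotate each pattern $P_i$ by the triple $(u_i,v_i,|P_{i,2}|)$, where $u_i,v_i$ are the AC nodes where $P_{i,1}, P_{i,2}$ end. The key reformulation is that $P_i$ ends at text position $j$ iff (a) $P_{i,2}$ ends at $j$, and (b) $P_{i,1}$ ends at some position $e$ in the window $W_i(j) := [j-|P_{i,2}|-\beta,\,j-|P_{i,2}|-\alpha]$ of length $\beta-\alpha+1$. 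The query therefore reduces to: for each $j$ and each second subpattern $s'$ ending at $j$, decide whether some paired first subpattern $s$ has an ending in $W(j)$.

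\textbf{Variants.} During the scan of $T$ with the AC automaton I maintain, for every first subpattern $s$, the sequence $E_s$ of text positions at which $s$ ends. For variant (2), I would spend the $O(d^2)$ preprocessing budget on a sparse table keyed by ordered pairs of subpattern ids that returns a pattern id (if any) in $O(1)$, together with a rolling bit array of the last $\beta$ positions per first subpattern. When the automaton signals that $s'$ ends at $j$, I look up the paired first subpatterns via the table and sweep the $\beta-\alpha+1$ relevant bits, charging the sweep to $j$ rather than to the candidate, which yields the $O(n(\beta-\alpha)+occ)$ query bound. For variant (1) I would replace both the dense table and the bit arrays by predecessor structures: a $y$-fast trie per first subpattern, so that range-emptiness on $E_s$ costs $O(\log\log d)$; together with a layered fractional-cascading structure over the paired-subpattern lists that contributes the $\log^2\min\{d,\log|D|\}$ factor while keeping the total space within $O(d\log^\varepsilon d+|D|)$ through its succinctness.

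\textbf{Main obstacle.} The delicate part of the argument is the amortised accounting. Naively, a single text position could trigger many pattern candidates and lead to $\Theta(d(\beta-\alpha))$ window sweeps, breaking the target bound. The core of the proof will therefore be to show that, by routing every candidate through the dense table in variant (2) or through the layered predecessor structures in variant (1), the per-position overhead is genuinely $O(\beta-\alpha)$ (respectively its $\log$-factor analogue), while every additional unit of work is paid for by a distinct reported occurrence in $occ$. Verifying this accounting and simultaneously respecting the tight $O(d\log^\varepsilon d)$ space bound in variant (1) — which forces a succinct design for both the $E_s$ and the paired-subpattern lookups — is where most of the care must go.
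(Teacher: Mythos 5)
There is a genuine gap, and it sits exactly where you placed your ``main obstacle'': the amortised accounting you defer to does not go through in the form you describe. At a single text position $j$ there may be up to $d$ second subpatterns ending (they can be nested as suffixes of one another), each paired with a first subpattern whose window check may \emph{fail}. Failed candidates are not covered by $occ$, so ``charging the sweep to $j$'' still leaves you with $\Theta(\sum_j c_j(\beta-\alpha))$ work, where $c_j$ is the number of candidates at $j$; in the worst case this is $\Theta(nd(\beta-\alpha))$, not $O(n(\beta-\alpha)+occ)$. This is the central difficulty of the problem, and announcing that ``the core of the proof will be to show'' the overhead is $O(\beta-\alpha)$ per position is not a proof --- with your per-candidate enumeration it is in fact false. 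A secondary issue is that your windows $W_i(j)$ depend on $|P_{i,2}|$, so distinct candidates at the same position query distinct windows, which blocks any attempt to batch the checks.

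The paper avoids this by anchoring at the \emph{gap} rather than at the end of $P_{i,2}$: for each text position $\ell$ and each of the $\beta-\alpha+1$ gap lengths, the set of second subpatterns \emph{starting} at $\ell$ is a root-to-node path in a suffix tree $T_S$ of the concatenated second subpatterns, and the set of first subpatterns \emph{ending} just before the gap is a root-to-node path in a suffix tree $T_{F^R}$ of the reversed first subpatterns. All candidates now share the same two anchor positions, and the question becomes a single path--path intersection of pattern labels. That intersection is computed wholesale --- never candidate by candidate --- either by a vertical-path decomposition plus $2$D orthogonal range reporting on a $d\times d$ grid (giving the $\log\log d\,\log^2\min\{d,\log|D|\}$ term and the $O(d\log^\varepsilon d)$ space of the range structure), or by a precomputed $d\times d$ lookup table whose \emph{up}/\emph{left}/\emph{prev} links guarantee that every pointer followed yields a reported pattern, giving $O(occ)$ per intersection. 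Your proposal contains no mechanism playing this role, so as written it does not establish either query bound; it would need to be supplemented by (or replaced with) a set-intersection structure of this kind.
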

Note that, $|D|$ is the sum of lengths of all patterns in the dictionary, \emph{not including} the gaps sizes.
\commentout{
\begin{theorem}\label{t:kgaps}
The dictionary matching with $k$ gaps problem, where $k>1$, can be
solved in:\\
Preprocessing time: $O(kd\log d + |D|)$.\\
Query time: $O(nk(\beta -\alpha )\sqrt {\log d} \log ^2 \min \{ d, \log |D| \} + occ$, where $occ$ is the number of patterns found.
\end{theorem}
}

The paper is organized as follows. In Sect.~\ref{s:trees} we describe our basic method based on suffix trees and prove the first part of Theorem~\ref{t:single}. In Sect.~\ref{s:lookup} we describe how this algorithm query time can be improved while doing more work in the preprocessing time and prove the second part of Theorem~\ref{t:single}. We also discuss efficient implementations of both algorithms using text splitting in Subsect.~\ref{ss:split}. Section~\ref{s:open} concludes the paper and poses some open problems.

%\section{Algorithm for a Dictionary with a Single Gap}\label{s:single}

\section{Bidirectional Suffix Trees Algorithm}\label{s:trees}
The basic observation used by our algorithm is that if a gapped pattern $P_i$ appears in $T$, then searching to the left from the start position of the gap we should find the reverse of the prefix $P_{i,1}$, and searching to the right from the end position of the gap we should find the suffix $P_{i,2}$. A similar observation was used by Amir et al.~\cite{aklllr00} to solve the dictionary matching with one mismatch problem. Their problem
is different from the $DMG$ problem, since they consider a single
mismatch while in our problem a gap may consist of several symbols. Moreover, a mismatch symbol
appears both in the dictionary pattern and in the text, while in the
$DMG$ problem the gap implies skipping symbols only in the text. Nevertheless, we show that their idea can also be adopted to the solve the $DMG$ problem.

Amir et al.~\cite{aklllr00} use two suffix trees: one for the
concatenation of the dictionary and the other for the reverse of
the concatenation of the dictionary. Combining this with set intersection on
tree paths, they solved the dictionary matching with one mismatch problem in time
$O(n\log^{2.5}|D|+ occ)$ where $n$ is the length of the text, $|D|$
is the sum of the lengths of the dictionary patterns, and $occ$ is
the number of occurrences of patterns in the text. Their
preprocessing requires $O(|D|log |D|)$. We use the idea to design an algorithm to the $DMG$ problem.

A naive method is to consider matching the prefixes $P_{i,1}$ for all $1\leq i\leq d$, and then
look for the suffixes subpatterns $P_{i,2}$, $1\leq i\leq d$, after the
appropriate gap and intersect the occurrences to report the dictionary patterns matchings.
However, as some of the patterns may share subpatterns and some
subpatterns may include other subpatterns, there may be several
distinct subpatterns occurring at the same text location, each of
different length. Therefore, we need to search for the suffixes $P_{i,2}$, $1\leq i\leq d$, after several gaps, each beginning at the end
of a matched prefix $P_{i,1}$, $1\leq i\leq d$. To avoid multiple searches we search
all subpatterns $P_{i,1}$, $1\leq i\leq d$ that \emph{end} at a certain location. Note that in order to find all subpatterns
ending at a certain location of the text we need to look for them backwards and find their reverse $P_{i,1}^R$.\commentout{ An example of matching a pattern's suffix and prefix appears in Figure \ref{f:gapped matching}.

\begin{figure}[h]\label{f:gapped matching}
\begin{center}
$P_1 = $ $a\ c\ \{2, 4\} \ d\ d$\\
$P_2 = $ $a\ b\ \{2, 4\} \ c\ d$

 \vspace{0.3cm}
  $ T =  c\ \ d\ \ e\ \ f \ \ \overleftarrow{  a\ \ b\ }\ e\ \ b\ \ \overrightarrow{ c\ \ d\ } \ a\ \ c\
  $ \\
 \small{ $\ \  \qquad 1\ \  2\ \  3\ \  4\  \ 5\ \ \ 6\ \ 7\  \ 8\ \ 9\  10 \ 11 \
 12$}
  \end{center}
\caption{Matching $P_{2,2}$ starting from $t_9$ skipping $\alpha = 2$ symbols backwards and matching $P_{2,1}^R$ to a reverse
prefix of $T$ ending at location $t_6$.}
\end{figure}}

In the preprocessing stage we concatenate the subpatterns $P_{i,2}$, $1\leq i\leq d$ of the dictionary separated by the symbol $
\$ \notin \Sigma$ to form a single string $S$. We repeat the procedure
for the subpatterns $P_{i,1}$, $1\leq i\leq d$, to form a single string $F$. We construct a suffix tree $T_S$ of
the string $S$, and another suffix tree $T_{F^R}$ of the string $F^R$, which is the reverse of the string $F$.

We then traverse the text by inserting suffixes of the text to the
$T_S$ suffix tree. When we pass the node of $T_S$ for which the path from the root is labeled $P_{i,2}$, it
implies that this subpattern occurs in the text starting from the
beginning of the text suffix. We then should find whether $P_{i,1}$
also appears in the text within the appropriate gap from the
occurrences of the $P_{i,2}$ subpatterns. To this end, we go backward
in the text skipping as many locations as the gap requires and
inserting the reversed prefix of the text to $T_{F^R}$. If a
node representing $P_{i,1}$ is encountered, we can output that
$P_i$ appears in the text.

Note that several dictionary subpatterns representative nodes
may be encountered while traversing the trees. Therefore, we should report the intersection between the subpatterns found from
each traversal. We do that by efficient intersection of labels on
tree paths, as done in Amir et al.~\cite{aklllr00}, using range queries on a grid.
However, since some patterns may share subpatterns, we do not label the
tree nodes by the original subpatterns they represent, as done in \cite{aklllr00}. Instead, we mark the nodes
representing subpatterns numerically in a certain order, hereafter discussed, regardless to the origin of the subpatterns ending at
those nodes.

In order to be able to trace the identity of the
patterns from the nodes marking, we keep two arrays $A_F$ and $A_S$, both of size
$d$. The arrays contain linked lists that identify when subpatterns are shared among several dictionary patterns. These arrays are filled as follows: $A_F[g] = i$ if $P_{i,1}^R$ is represented by node labelled $g$ in $T_{F^R}$ and $A_S[h] =i$ if $P_{i,2}$ is represented by the node
labelled by $h$ in $T_S$. In addition, $A_F[g]$ is linked to $h$ and vice versa, if $g, h $ represent two subpatterns
of the same pattern of the dictionary.

Every pattern $P_i$ is represented as a point $i$ on a grid of size $d \times d$,
denoted by $< g, A_F[g].link >$, that is, the $x$-coordinate is the mark of the node representing $P_{i,1}^R$ in $T_{F^R}$, and the $y$-coordinate is the mark of the node representing $P_{i,2}$ in $T_S$. Now, if we mark the nodes representing the end of subpatterns so
that the marks on a path are consecutive numbers, then the problem of
intersection of labels on tree paths can be reduced to range queries on a grid in the following way.
Let the first and last mark on the relevant path in $T_{F^R}$ be $g, g'$ and, similarly, on the path in $T_S$ let the first and last mark be $h,h'$. Thus, points $<x, y>$ on the grid where $g\leq x \leq g'$ and $h \leq y \leq h'$, represent patterns in the dictionary for which both subpatterns appear at the current check. A range query can be solved using the algorithm of \cite{CLP-11}.

We use the decomposition of a tree into vertical path for the nodes
marking, suggested by \cite{aklllr00}, though we use it
differently. A vertical path is defined as follows.
\begin{definition}\cite{aklllr00} \label{d:verticalpath}
A \emph{vertical path} of a tree is a tree path (possibly
consisting of a single node) where no two  nodes on the path are
of the same height.
\end{definition}
After performing the decomposition, we can traverse the vertical
paths and mark by consecutive order all tree nodes representing
the end of a certain subpattern appearing in $T_S$ or its reverse
appearing in $T_{F^R}$. Since some subpatterns may be shared by
several dictionary patterns, there are \emph{at most} $d$ marked nodes at each of
the suffixes trees.

Note that due to the definition of vertical path there may be
several vertical paths that have a non empty intersection with the
unique path from the root to a specific node. Hence, when considering the
intersection of marked nodes on the path from the root till a
certain node marked by $g$ in $T_{F^R}$ and the marked nodes
on the path from the root till a node marked by $h$ in
$T_S$, we actually need to check the intersection of all
vertical paths that are included in the path from the root to
$g$ with all vertical paths that are included in the path from
the root to $h$.

The algorithm appears in Figure 1\commentout{ \ref{f:secondalg1gap}}.

\begin{figure}[h]\label{f:secondalg1gap}
\begin{tabular}{ll}
 \hline
 & \\
& \textsc{Single\_ Gap\_ Dictionary} ($T, D$ )\\
\hline
\hline
& \textbf{Preprocessing}:\\
\hspace{0.1cm} 1 &\hspace{0.2cm}$F = P_{1,1}\$  P_{2,1}\$ \cdots P_{d,1}$.\\
\hspace{0.1cm} 2 &\hspace{0.2cm}$S = P_{1,2}\$  P_{2,2}\$ \cdots P_{d,2}$.\\
\hspace{0.1cm} 3 &\hspace{0.2cm}$T_{S} \leftarrow $ a suffix tree of $S$.\\
\hspace{0.1cm} 4 &\hspace{0.2cm}$T_{F^R} \leftarrow $ a suffix tree of the $F^R$.\\
\hspace{0.1cm} 5 &\hspace{0.2cm}\textbf{For} every edge $(u,v) \in \{T_{S}, T_{F^R}\}$ with label $y\$ z$, where $y,z \in \Sigma^*$\\
\hspace{0.1cm} 6 & \hspace{0.4cm} Break $(u,v)$ into $(u,w)$ and $(w,v)$ labelling $(u,w)$ with $y$ and $(w,v)$ with $ \$ z$.\\
\hspace{0.1cm} 7 &\hspace{0.2cm}Decompose $T_{F^R}$ into vertical paths.\\
\hspace{0.1cm} 8 &\hspace{0.2cm}Mark the nodes representing $P_{i,1}$ on the vertical paths of $T_{F^R}$.\\
\hspace{0.1cm} 9 &\hspace{0.2cm}Decompose $T_S$ into vertical paths.\\
\hspace{0.1cm} 10 &\hspace{0.2cm}Mark the nodes representing $P_{i,2}$ on the vertical paths of $T_S$.\\
\hspace{0.1cm} 11 &\hspace{0.2cm}Preprocess the points according to the patterns for range queries.\\
\hspace{0.1cm}  &  \\
& \textbf{Query}:\\
\hspace{0.1cm} 12 &\hspace{0.2cm}\textbf{For} $\ell$ = $\min _i\{ |P_{i,1}|\}  + \alpha $ to $n$ \\
\hspace{0.1cm} 13 &\hspace{0.4cm} Insert  $t_{\ell}t_{\ell+1}\ldots t_n$ to $T_S$.\\
\hspace{0.1cm} 14 &\hspace{0.4cm} $h \leftarrow$ node in $T_S$ representing suffix $t_{\ell}t_{\ell+1}\ldots t_n$.\\
\hspace{0.1cm} 15 &\hspace{0.4cm}\textbf{For} $f =\ell - \alpha - 1$ to  $\ell - \beta - 1$\\
\hspace{0.1cm} 16 &\hspace{0.6cm} Insert $t_{f}t_{f-1}\ldots t_1$ to $T_{F^R}$.\\
\hspace{0.1cm} 17 &\hspace{0.6cm} $g \leftarrow $ node in $T_{F^R}$ representing $t_{f}t_{f-1}\ldots t_1$.\\
\hspace{0.1cm} 18 &\hspace{0.6cm} \textbf{For} every vertical path on the the path $p$ from the root to $h$ \\
\hspace{0.1cm} 19 &\hspace{0.8cm} \textbf{For} every vertical path $p'$ on the path from the root to $g$ \\
\hspace{0.1cm} 20 &\hspace{1cm} Perform a range query on a grid with the first and last marks of $p$ and $p'$ \\
\hspace{0.1cm} 21 &\hspace{1cm} Report appearance for every $P_i$ where point $i$ appears in the specified range.\\
\hline
\end{tabular}
\caption{Dictionary matching with a single gap algorithm, intersection is computed by range queries on a grid.}
\end{figure}

\begin{lemma}\label{l:intgrid}
The intersection between the subpatterns appearing at location
$t_{\ell}$ and the reversed subpatterns ending at $t_{\ell-gap-1}$ can be
computed in time $O(occ + \log \log d  \log ^2 \min \{ d, \log |D| \})$, where $occ$ is the number of patterns found. The preprocessing requires $O(|D| + d\log d)$ time and $O(|D| + d\log^{\varepsilon} d)$ space, for arbitrary small $\varepsilon$.
\end{lemma}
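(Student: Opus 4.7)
The plan is to reduce the claimed intersection problem to a bounded number of orthogonal range queries on a $d\times d$ grid, and then invoke the grid range query data structure of~\cite{CLP-11}. The core structural ingredient is the vertical path decomposition of $T_S$ and $T_{F^R}$: because marks on each vertical path are assigned consecutively, the set of marks that appear on the root-to-$h$ path in $T_S$ is the union of the mark intervals of the vertical paths that meet it, and similarly for the root-to-$g$ path in $T_{F^R}$. Intersecting the two sets of marked nodes therefore decomposes into one range query per pair (one vertical path from each tree) of the form: report all points $i=\langle x,y\rangle$ with $x$ in a contiguous mark interval on $T_{F^R}$ and $y$ in a contiguous mark interval on $T_S$.

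First I would describe the preprocessing. The suffix trees $T_S$ and $T_{F^R}$ are built in $O(|D|)$ time and space. Decomposing each tree into vertical paths and assigning consecutive marks along each such path can be carried out in a single traversal in $O(|D|)$ time. Since at most $d$ marked nodes (one per subpattern, up to identifications from sharing) exist in each tree, we obtain exactly $d$ points on the grid, one per dictionary pattern, and preprocess them with the structure of~\cite{CLP-11}. This costs $O(d\log d)$ time and $O(d\log^\varepsilon d)$ space, which together with the suffix tree construction gives the stated preprocessing bounds.

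Next I would analyze the query. For a fixed pair $(\ell, f)$ the query decomposes as follows. Walk the vertical paths lying on the root-to-$h$ path in $T_S$ and, for each, walk the vertical paths lying on the root-to-$g$ path in $T_{F^R}$; for each pair perform one range query using the mark intervals of the two vertical paths as the two coordinate ranges. By the property used in~\cite{aklllr00}, the number of distinct vertical paths meeting any root-to-node path is $O(\log \min\{d,\log |D|\})$: the relevant depth measure is controlled by the number of marked ancestors (at most $d$) on one hand and by the height of the tree ($O(\log |D|)$ after the natural compression) on the other. Hence the total number of range queries is $O(\log^2 \min\{d,\log |D|\})$. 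Using~\cite{CLP-11}, each query takes $O(\log\log d)$ time plus output-sensitive reporting time, and summing the reported points across all $O(\log^2 \min\{d,\log |D|\})$ queries at a fixed $(\ell,f)$ yields the claimed total $O(occ + \log\log d\cdot \log^2 \min\{d,\log |D|\})$.

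The main obstacle is justifying the $O(\log \min\{d,\log |D|\})$ bound on the number of vertical paths that meet a single root-to-node path, and hence the quadratic factor $\log^2\min\{d,\log |D|\}$. This requires verifying that the vertical-path decomposition of Definition~\ref{d:verticalpath} interacts well with the marked nodes: although the tree may be tall, one can only encounter a new vertical path when climbing past a marked branching node of distinct height, and standard heavy-path-style accounting together with the bound of $d$ marked nodes yields the logarithm of the minimum. Once this structural lemma is in place, the time and space bounds follow directly by combining the decomposition with the grid range query data structure of~\cite{CLP-11}, establishing Lemma~\ref{l:intgrid}.
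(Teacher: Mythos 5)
Your overall route coincides with the paper's: decompose both suffix trees into vertical paths so that the marks along each vertical path form a contiguous interval, reduce the intersection of marked nodes on the two root-to-node paths to one orthogonal range query per pair of vertical paths, and answer these with the structure of \cite{CLP-11} on the $d$ pattern points, which gives the stated $O(d\log d)$ preprocessing time, $O(d\log^{\varepsilon}d)$ space, and $O(\log\log d)$ plus output-sensitive time per range query. The preprocessing analysis (linear-time suffix tree construction, linear-time decomposition and marking, at most $d$ marked nodes per tree) also matches the paper.

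The genuine gap is in the counting step that you yourself flag as ``the main obstacle.'' You assert that the number of vertical paths meeting a single root-to-node path is $O(\log \min\{d,\log|D|\})$ and appeal to ``standard heavy-path-style accounting'' to justify taking the \emph{logarithm} of the minimum; no argument you sketch delivers this, and it is not what the decomposition provides. The paper's reasoning is different and weaker: it establishes two separate upper bounds on the number of relevant vertical paths crossing a root-to-node path --- $\log|D|$, from the vertical-path decomposition bound of \cite{aklllr00} applied to a tree with $O(|D|)$ nodes, and $d$, because only vertical paths containing one of the at most $d$ marked nodes can contribute a nonempty mark interval --- and takes the minimum of the two, giving $\min\{d,\log|D|\}$ vertical paths per root-to-node path and hence a number of range queries equal to the product of the two per-path counts. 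Heavy-path accounting charges one new vertical path per light edge on the root-to-node path, and there can be $\Theta(\log|D|)$ of these; restricting attention to marked nodes caps the count at $d$, but nothing drives it down to the logarithm of the minimum. You should therefore replace your claimed per-path bound by $\min\{d,\log|D|\}$ and square it, which is how the paper's proof proceeds (its expression $\log^{2}\min\{d,\log|D|\}$ is the notation it uses for the resulting square of that minimum); as written, your derivation rests on an unproven and, in general, false structural claim.
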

\commentout{
\begin{proof}
The intersection of subpatterns occurrences can be computed by the intersection of labels on the trees paths, which can be reduced to the
problem of range queries on a grid. Using \cite{CLP-11} for a grid with $d$ points, each range query requires $O(occ + \log\log d)$, where $occ$ is the number of points within the range, thus, $occ$ is the number of patterns for which
both subpatterns appear in the text separated by a legal gap.

By~\cite{aklllr00}, the number of vertical
paths intersecting a path from the root to a certain node is
bounded by $\log |D|$, where $|D|$ is the number of nodes in the suffix tree. In our case, there are at most $d$
leaves in each of the suffix trees, therefore, there are
at most $d$ vertical paths intersecting a path from the root to
a certain node. Consequently, in our case there are at most
$\min \{d, \log |D|\}$ vertical paths intersecting a path from
the root to a certain node. As we need to perform a range query of
every vertical path from the path reaching node $g$ with every
vertical path from the path reaching node $h$, we perform up to
$\log ^2  \min \{ d, \log |D| \}$ range queries. All in all, we have $O(occ + \log\log d \log ^2 \min \{ d, \log |D| \})$ time for finding the patterns occurring at a certain location of the text.

In the preprocessing we build two suffix trees, each in time
linear in the size of the dictionary, $|D|$. We decompose them into
vertical paths and mark the nodes representing subpatterns in
linear time in the size of the trees. The preprocessing of $d$
points on a grid for range queries using \cite{CLP-11} requires $O(d
\log d)$ time. Therefore, the preprocessing time is $O(d\log d + |D|)$. The space requirement is $O(d\log^{\varepsilon} d + |D|)$, for arbitrary small $\varepsilon$. \QED
\end{proof}
}

At each of the $O(n)$ relevant locations of the text, the
algorithm inserts the current suffix of the text to $T_S$ using Weiner's
algorithm \cite{W-73}. For each of the prefixes defined by all $\beta - \alpha + 1$ possible specific gaps we
insert its reverse to $T_{F^R}$. As explained in~\cite{aklllr00}, the navigation on the suffix tree and reverse suffix tree can be done in amortized $O(1)$ time per character insertion. Note that each character is inserted to $T_S$ once and to $T_{F^R}$ $O(\beta-\alpha)$ times. This concludes the proof of the first part of Theorem~\ref{t:single}.

\section{Intersection by Lookup Table}\label{s:lookup}
If a very fast query time is crucial and we are willing to pay in
preprocessing time, we can solve the problem of intersection
between the appearances of subpatterns on the paths of $
T_{F^R}, T_S$ using a lookup table.

The $inter$ table is of size $d \times d$,  where $inter[g,h]$
refers to the set of all indices $i$ of patterns $P_i$ such that
$P_{i,1}^R$ appears on the path from the root of $T_{F^R}$ till
the node marked by $g$ and $P_{i,2}$ appears on the path from the
root of $T_S$ till the node marked by $h$. We fill the table using
dynamic programming procedure.  Consequentially, labelling the nodes
representing subpatterns, can be done by any numbering system,
guaranteeing that nodes closer to the root are labelled by smaller
numbers than nodes farther from the root, such as the BFS order.

Saving at every entry all the relevant pattern indices causes
redundancy in case subpatterns include others as their prefix or
suffix. In order to save every possible occurrence
only once, we save pattern index $i$ only  at entry $inter[g, h]$
where $g, h$ are the nodes respectively representing both subpatterns of $P_i$
 in the suffix trees. Note that at most one index can
be saved at $inter[g, h].index$ as two patterns are bound to
differ by at least one subpattern. The filling of these $d$ fields
is done in the preprocessing.

  We hereafter prove that besides the $index$ field,  merely 3 links are required for every
$inter[g,h]$ :
\begin{enumerate}
 \item A link to $inter[g', h]$ in case node $h$ represents subpattern $P_{i, 2}$
and $g'$ is the maximal labelled ancestor of node $g$, representing
$P_{i, 1}$. We call this link an $up$ link.
\item A link to cell $inter[g, h']$ in case
 node $g$ represents the subpattern $P_{i, 1}$ and $h'$ is the
maximal labelled ancestor of node $h$
 representing
$P_{i, 2}$. We call this link a $left$ link.
\item A link to  cell $[prev^*(g), prev^*(h)]$, where $prev^*(g)$
and $prev^*(h)$ are  the closest marked ancestors of the nodes
marked by $g$ and $h$ where  $inter[prev^*(g),$ $ prev^*(h)]$ has a
pattern index or non null $up$ or $left$ link.
 \end{enumerate}

 The recursive rule for
constructing the lookup table is described in the following lemma.
\begin{lemma}{\textbf{The Recursive Rule }}\label{l:interrecursive}\\
Let $prev(x)$ be the maximal labelled ancestor of the node labelled by $x$.
\begin{displaymath}
inter[g,h].up= \left\{ \begin{array}{l}
[prev(g), h] $ \qquad \qquad \quad \textbf{if}  \ $ inter[prev(g), h].index \neq null\\
inter[prev(g),h].up  \quad \quad $\textbf{otherwise}$ \\
   \end{array}
  \right.
\end{displaymath}
\begin{displaymath}
inter[g,h].left= \left\{ \begin{array}{l}
[g, prev( h)] $ \qquad \qquad \quad \textbf{if}  \ $ inter[g,prev(h)].index \neq null\\
inter[g,prev(h)].left  \quad \quad $\textbf{otherwise}$ \\
   \end{array}
  \right.
\end{displaymath}
\begin{displaymath}
inter[g,h].prev= \left\{ \begin{array}{l}
 [prev(g),prev(g)] $ \ \textbf{if}  \ $ inter[prev(g),prev(h)].index \neq null\\
 \quad \qquad \qquad \qquad \qquad $ or\ $inter[prev(g),prev(h)].up \neq null\\
 \quad \qquad \qquad \qquad \qquad $or\ $inter[prev(g),prev(h)].left \neq null\\
       inter[prev(g),prev(g)].prev   \quad $\textbf{otherwise}$ \\
   \end{array}
  \right.
  \end{displaymath}
\end{lemma}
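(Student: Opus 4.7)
The plan is to verify each of the three recursive equations by induction on the BFS labels of $g$ and $h$. Because BFS labelling assigns every proper ancestor a strictly smaller label, the entries $inter[prev(g),h]$, $inter[g,prev(h)]$, and $inter[prev(g),prev(h)]$ have already been computed when $inter[g,h]$ is evaluated, so the recursion is well-defined. The first step is to make the intended semantics precise: $inter[g,h].up$ should point to $[g^\ast,h]$, where $g^\ast$ is the nearest strict ancestor of $g$ in $T_{F^R}$ with $inter[g^\ast,h].index \neq null$; $inter[g,h].left$ is its symmetric counterpart in $T_S$; and $inter[g,h].prev$ is the nearest strict ancestor pair $[prev^\ast(g),prev^\ast(h)]$ whose cell carries a non-null $index$, $up$, or $left$ field.

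For the $up$ rule I would split on whether $inter[prev(g),h].index$ is null. If not, then $prev(g)$ is itself the nearest strict ancestor contributing an index at row $h$, so the first branch correctly returns $[prev(g),h]$. If it is null, then any contributing ancestor lies strictly above $prev(g)$, and by the inductive hypothesis applied to $inter[prev(g),h]$ the correct answer is already recorded in $inter[prev(g),h].up$, exactly as the second branch states. The $left$ rule is proved symmetrically, by swapping the roles of the two trees.

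The main obstacle will be the $prev$ rule, because the ``non-null'' predicate is now a disjunction over three fields and the recursion must skip ancestor pairs where all three fields are null. I would handle this by case analysis on $inter[prev(g),prev(h)]$: if any of its $index$, $up$, or $left$ fields is non-null, then $[prev(g),prev(h)]$ itself meets the definition of the closest nontrivial ancestor pair, validating the first branch; otherwise this cell contributes nothing and the recursion must continue upward, which by the inductive hypothesis is exactly $inter[prev(g),prev(h)].prev$. A subtlety I would need to address is that $prev^\ast$ must move up in both trees simultaneously, so I must verify that following the stored $prev$ chain is consistent with taking joint ancestors in the product of the two trees; this follows because each descendant's $prev$ link already pairs ancestors coherently, and composing such links preserves the joint-ancestor property. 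The base case is immediate: when $prev(g)$ or $prev(h)$ does not exist because $g$ or $h$ is a root, the referenced cells are taken to be null and every right-hand side degenerates to null, as required.
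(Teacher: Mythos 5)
Your proposal is correct and follows essentially the same route as the paper's proof: a case split on whether the immediately preceding entry ($inter[prev(g),h]$, $inter[g,prev(h)]$, or $inter[prev(g),prev(h)]$) carries the needed information, with the remaining ancestors reached through the already-computed links of that entry, justified by the root-to-leaf label ordering. Your explicit inductive framing and statement of the fields' intended semantics merely make precise what the paper's four-case argument leaves implicit, so there is no substantive difference to report.
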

\begin{proof}

Every $inter[g,h]$ entry for $1\leq g, h \leq d$, has to contain links to all entries  containing indices  of patterns whose first
subpattern is represented by node $g$ or  its ancestors in $T_{F^R}$ and its second subpattern is represented by node $h$ or its ancestors in $T_S$.
Assume, without loss of generality, that the marks on the
path from the root of $T_{F^R}$ till the node marked by $g$ are
$g'_1, g'_2,...g'_a, g$ and the marks on the path from the root of
$T_S$ till the node marked by $h$ are $h'_1, h'_2,...h'_b, h$.
There are four cases of relevant entries, for each we prove the correctness of the recursive rule.
\begin{enumerate}
\item Case 1: In case the pair of labels $<g, h>$ represent pattern $P_i$, then assygn $inter[g,h].index$ by $i$ in the preprocess. No recursion is required.

\item Case 2: Some $\{ g'_x \} $ represent the reverse of first subpattern of  some dictionary patterns (as theses subpatterns include  others as their suffixes),
and these patterns  share the second subpattern where $h$ represent this second subpattern.
In  such a case $inter[g,h]$ should be linked to all entries $\{inter[g'_x, h] \}$. Nevertheless, saving a link to the
ancestor node  with maximal label, is sufficient, since
all other relevant patterns can be reached by recursively
following  $up$ links starting from $inter[g'_x, h]$.
Therefore,  we consider $prev(g)$ as the  maximal labeled ancestor and if  $inter[prev(g), h]$ includes an index, we assign $up$ with
$[prev(g), h]$. Otherwise we are seeking the same ancestor
$inter[prev(g),h]$ looked for, for its $up$ link, hence we assign $up$ with $inter[prev(g),
h].up$.

\item  Case 3: Some $\{ h'_y \} $ represent the second subpattern of  some dictionary patterns (as theses subpatterns include  others as their prefixes),
and these patterns  share the first subpattern where $g$ represent the reverse of this first subpattern. Due to similar arguments
we assign the $left$ link either with $[g, prev(h)]$   or with $inter[g,
prev(h)].left$.

\item Case 4: Some ancestors of  nodes $g$ and $h$ represent both subpatterns of dictionary patterns. Note that it must be ancestors to both nodes as the previous  cases dealt with representations of patterns using the nodes $g$ or $h$ themselves. We need to enable $inter[g, h]$  to follow all such entries, to this end we look for the closest such ancestors. We check whether $inter[prev(g), prev(h)]$ includes an index
or an $up$ or $left$ link. If it does, link the $prev$ link to
$[prev(g), prev(h)]$. If it does not, it means that no pattern is represented  either by node $prev(g)$ and a node on the path from the
 root of $T_S$ till node $prev(h)$  or by a node on the path from the root of $T_{S^R}$ till node $prev(g)$ and the node $prev(h)$. Therefore we can step back in both pathes of the trees to  $prev(prev(g))$ and $prev(prev(h))$. Such a scenario can repeat, yet $inter[prev(g), prev(h)].prev$  which is $[prev^*(g), prev^*(h)]$,  was already computed, due to the numbering system, and the relevant information  is bound to appear at that entry, so we can follow it by assigning $prev = inter[prev(g), prev(h)].prev$.

 \end{enumerate}
 \QED
\end{proof}

\paragraph{\textbf{Example.}}An example of using the recursive rule, filling the $inter$ table
can be seen in Figure~\ref{f:lookup} for the  trees depicted in
Figure~\ref{f:trees} and the following dictionary.
 $P_1 = < 3, 9 >$, $P_2= <3,5>$, $P_3 = < 2,9 >$, $P_4=< 2,7 >$, $P_5= < 1,2 >$, $P_6 = < 1,5 >$, $P_7 = < 4,1 >$, $P_8= < 3,4 >$, $P_9 = < 2,3 >$, $P_{10} = < 4,8 >$.
Note the dashed $prev$ arrow from $inter[3,8]$ to $inter[1,5]$, due to the lack of information in $inter[2,6]$, which is the entry of the immediate ancestors of $[3,8]$.
\begin{figure}
  \begin{center}
  % Requires \usepackage{graphicx} [width=12 cm]  [scale=0.5],
\includegraphics[scale=0.35]{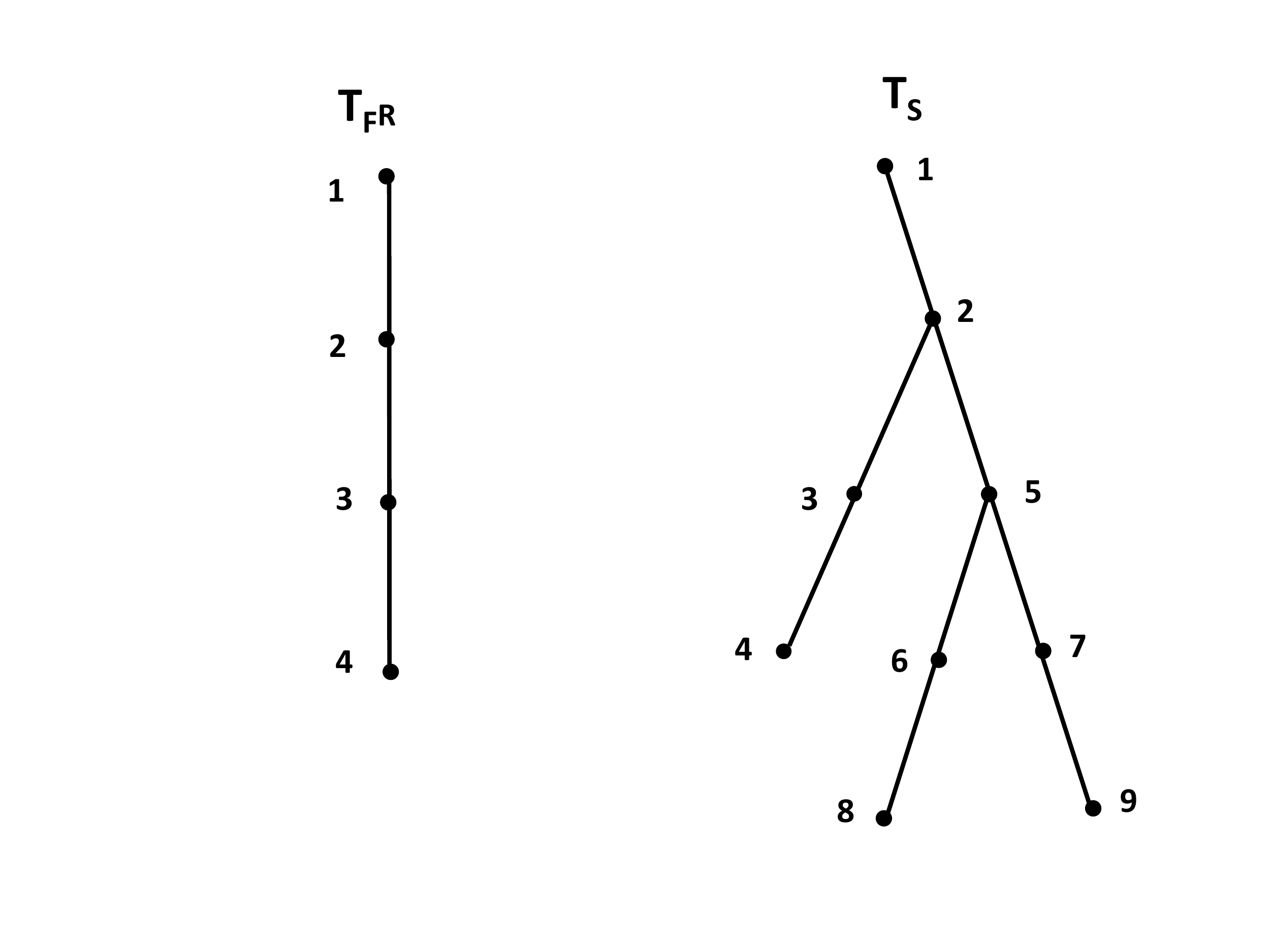}\\
  \caption{Two suffix trees are shown, where
  the nodes representing subpatterns are marked by numerical
  labels.}\label{f:trees}
\end{center}
\end{figure}
\begin{figure}
  \begin{center}
  % Requires \usepackage{graphicx}
 \includegraphics[height=3in,width=5in]{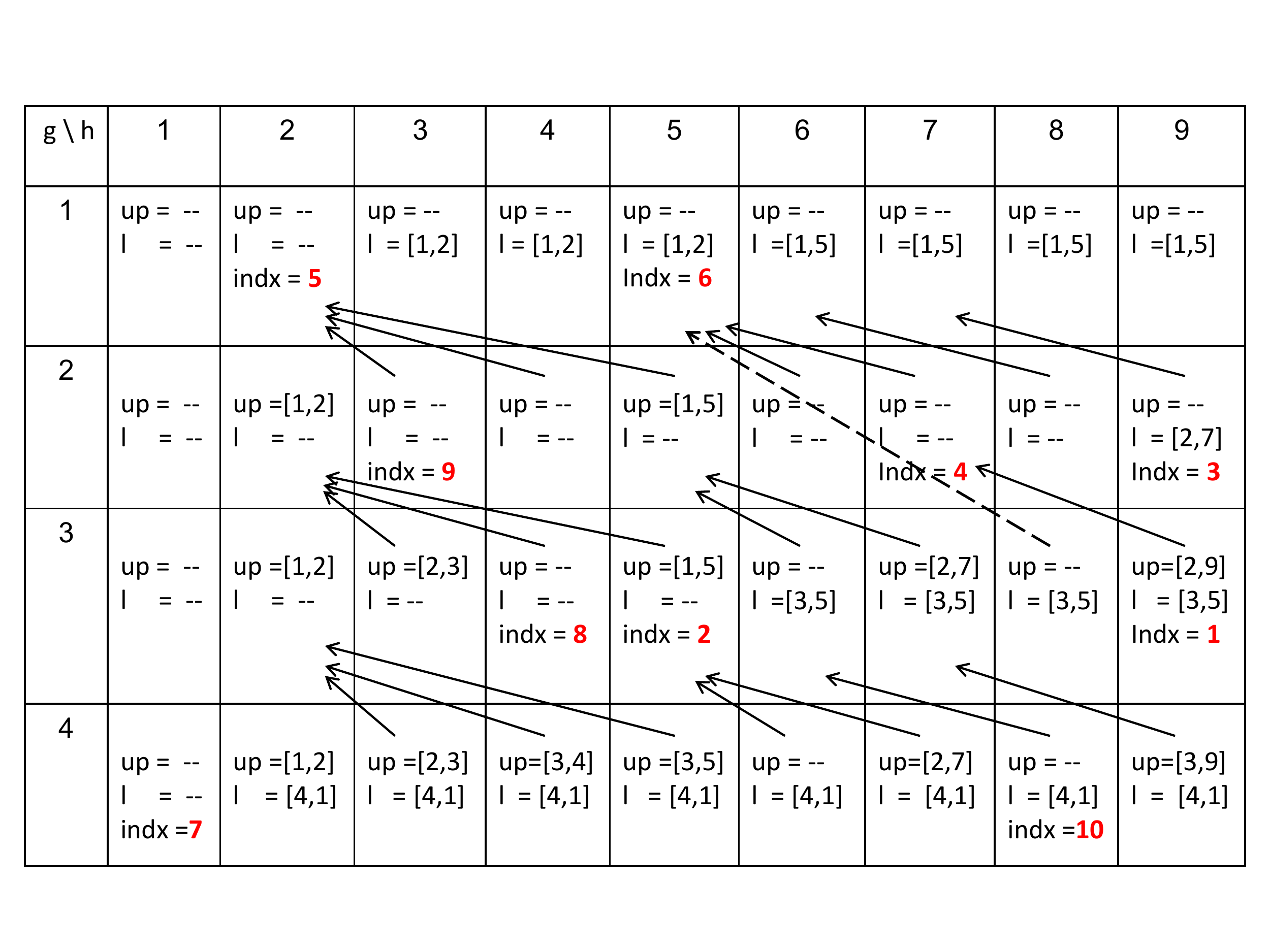}
  \caption{The lookup table built according to the trees depicted
  in Figure~\ref{f:trees}. The arrows represent the $prev$ links.}\label{f:lookup}
\end{center}
\end{figure}

Lemma~\ref{l:preLook} gives the preprocessing time and space
guarantee.
\begin{lemma}\label{l:preLook}
Preprocessing to build the $inter$ table r
equires $O(|D| + d ^ 2
)$ time.
\end{lemma}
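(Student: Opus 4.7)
The plan is to bound each preprocessing task separately and sum the costs. First, build the two suffix trees $T_S$ and $T_{F^R}$ in $O(|D|)$ time, decompose them, and mark the (at most) $d$ nodes representing subpatterns. Assigning BFS labels to the marked nodes, as mandated just before the statement, guarantees that for every marked node $g$ its immediate marked ancestor $prev(g)$ carries a strictly smaller label than $g$. The $prev$ function can then be computed for every marked node by a single DFS traversal of each suffix tree that maintains, along the current root-to-node path, a stack of marked ancestors encountered so far; this runs in $O(|D|)$ total time. Initializing the index fields takes $O(d)$ time: for every pattern $P_i = \langle g, h \rangle$ set $inter[g,h].index := i$, which is unambiguous because, as already observed, no two patterns can share both subpattern-nodes.

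The main step is to fill the $d \times d$ table by dynamic programming in an order that respects the dependencies of the recursive rule of Lemma~\ref{l:interrecursive}. Since $prev(g) < g$ and $prev(h) < h$ under BFS numbering, processing the cells in increasing lexicographic order of $(g,h)$ (iterating $g$ from $1$ to $d$ and, for each $g$, iterating $h$ from $1$ to $d$) guarantees that whenever we compute the $up$, $left$ and $prev$ fields of $inter[g,h]$, each of the three reference cells $inter[prev(g), h]$, $inter[g, prev(h)]$ and $inter[prev(g), prev(h)]$ is already populated. Each recursive assignment inspects a constant number of fields of a single previously-filled cell and then performs one pointer assignment, so a single cell is processed in $O(1)$ time and the entire fill takes $O(d^2)$ time.

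Summing the phases yields $O(|D|)$ for suffix tree construction, marking, and computation of $prev$; $O(d)$ for installing the $d$ index values; and $O(d^2)$ for the dynamic-programming fill of the table, for a total of $O(|D| + d^2)$. The one delicate point I would verify carefully is that the recursive rules genuinely close over entries with strictly smaller labels in both coordinates; BFS numbering is precisely what makes this work, because any ancestor of a marked node (in particular $prev(g)$ or $prev(h)$) has already been processed by the time cell $(g,h)$ is visited, and hence the three-step fill of each cell is well defined and needs no iteration beyond the constant-time lookups prescribed by the recurrence.
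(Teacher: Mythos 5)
Your proof is correct and follows essentially the same route as the paper: BFS labelling of the marked nodes in $O(|D|)$, installing the $d$ pattern indices, and an $O(1)$-per-cell dynamic-programming fill of the $d\times d$ table justified by Lemma~\ref{l:interrecursive}. You merely make explicit what the paper leaves implicit, namely that the lexicographic processing order is what guarantees the three referenced cells $inter[prev(g),h]$, $inter[g,prev(h)]$ and $inter[prev(g),prev(h)]$ are already available.
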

\begin{proof}
The preprocess requires  labelling  both suffix trees
in BFS order all in $O(|D|)$. Filling $d$ $inter[g,h]$ entries
with the index of the pattern the nodes $g,h$ represent can be
done in at most $O(d^2)$.

 Filling each of the $d^2$
entries of the table $inter[g, h]$ can be performed in $O(1)$
due to Lemma \ref{l:interrecursive}.
\QED
\end{proof}

\paragraph{\textbf{Answering LookUp Queries.}}

A query of a node $g$ from $T_{F^R}$ and node $h$ from $T_S$ is answered by  consulting
 entry $inter[g,h]$. We  output $inter[g,h].index$ if exists which means  there is a pattern whose
first subpattern is represented by node $g$ and its second aubpattern is represented by node $h$.
In order to report all relevant patterns, that their subpatterns are represented by $g$ or its ancestors and by $h$ or
its ancestors in the suffix trees, we  follow the links saved at the current entry, as detailed in the procedure appearing at Figure~\ref{f:query}.

\begin{figure}[h]
\begin{center}
\begin{tabular}{rl}
 \hline
 & \\
& \textsc{LookupQuery}( g, h ) \\
\hline
\hline
& \\
\hspace{0.6cm} 1 & \hspace{0.8cm}\textbf{If} $inter[g, h].index \neq null  \qquad \qquad \quad $  \\
\hspace{0.6cm} 2 & \hspace{1.3cm} Output $inter[g, h].index$ \\
 & \\
\hspace{0.6cm} 3 & \hspace{0.8cm}\textbf{If} $inter[g, h].prev \neq null$\\
\hspace{0.6cm} 4 & \hspace{1.3cm} Let $[g', h'] \leftarrow  inter[g,h].prev $.\\
\hspace{0.6cm} 5 & \hspace{1.3cm} \textsc{LookupQuery}( g', h' )  \\
\hspace{0.6cm}  &  \\
\hspace{0.6cm} 6 &  \hspace{0.8cm}Let $ G \leftarrow g $.\\
\hspace{0.6cm} 7 & \hspace{0.8cm}\textbf{ While}  $( inter[g,h].up \neq null$ ).\\
\hspace{0.6cm} 8 & \hspace{1.3cm} Let $[g', h] \leftarrow inter[g,h].up $.\\
\hspace{0.6cm} 9 & \hspace{1.3cm} Output $inter[g', h].index$ \\
\hspace{0.6cm} 10 & \hspace{1.3cm} $ g \leftarrow g'$.\\
& \\
\hspace{0.6cm} 11 & \hspace{0.8cm}\textbf{ While}  $( inter[G,h].left \neq null $).\\
\hspace{0.6cm} 12 & \hspace{1.3cm} Let $[G, h'] \leftarrow inter[G,h].left $.\\
\hspace{0.6cm} 13 & \hspace{1.3cm} Output $inter[G, h'].index$ \\
\hspace{0.6cm} 14 & \hspace{1.3cm} $ h \leftarrow h'$.\\
\hline
\end{tabular}
\end{center}
\caption{The Lookup Query Procedure } \label{f:query}
\end{figure}

Lemma~\ref{l:queryLookTime} gives the query time guarantee.
\begin{lemma}\label{l:queryLookTime}
Using the $inter$ table, the intersection between the subpatterns
appearing at location $t_{\ell}$ and the reversed subpatterns
ending at $t_{\ell-gap-1}$ can be computed in time $O(occ)$, where
$occ$ is the number of patterns found.
\end{lemma}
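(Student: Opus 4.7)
The plan is to amortize the constant-time cost of every operation in Figure~\ref{f:query} against a pattern it causes to be reported, so that the total work is proportional to the output size. First I would analyze a single activation of \textsc{LookupQuery}$(g,h)$ in isolation: apart from the $O(1)$ initial $index$ test and the $O(1)$ $prev$ test, the body consists of two straight-line loops. Each iteration of the $up$-loop at line~7 advances one $up$ link and prints exactly one pattern index at line~9, and each iteration of the $left$-loop at line~11 advances one $left$ link and prints exactly one index at line~13. Hence the work done by a single activation, excluding the recursive call at line~5, is $O(1+u+\ell)$, where $u$ and $\ell$ are the number of iterations of the two loops, and each of these $u+\ell$ units is already charged to a distinct reported pattern.

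Next I would argue that every recursive call triggered by a $prev$ link is productive. By the construction of the $prev$ field in Lemma~\ref{l:interrecursive}, whenever $inter[g,h].prev$ is non-null it points to an entry $[prev^*(g),prev^*(h)]$ whose $index$, $up$, or $left$ field is itself non-null; consequently the recursive invocation on that entry prints at least one pattern before it returns. The number of activations along any chain of $prev$ pointers is therefore bounded by the number of patterns those activations report, and the $O(1)$ per-activation overhead can be absorbed into that count.

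Finally I would verify that no pattern is charged twice. Because $up$, $left$, and $prev$ navigate to structurally distinct table entries corresponding to strictly different pairs of subpatterns, and because at most one pattern index is stored at any cell (as observed just above Lemma~\ref{l:interrecursive}), the indices emitted along the different arms of the walk refer to pairwise different dictionary patterns. Combining the per-activation bound with the productivity of $prev$ chains then yields a total running time proportional to the number of reported patterns, namely $O(occ)$. The main subtlety is the productivity claim for $prev$ recursion; once it is established from the recursive rule, the amortization over the $up$ and $left$ loops is immediate from the layout of the lookup table.
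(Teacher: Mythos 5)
Your proof is correct and follows essentially the same charging argument as the paper: each $up$/$left$ step reports exactly one pattern, and each $prev$ step is productive because the recursive rule guarantees it lands on an entry with a non-null $index$, $up$, or $left$ field. Your additional check that no pattern is reported twice is a small refinement the paper leaves implicit, but the approach is the same.
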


\begin{proof}
  The query procedure is based on following links and reporting indices found.
Every step of following an $up$ or $left$  link  implies that another pattern is reported, as
those links connect two subpatterns including one another, where
both should be reported.  The $prev$ link either directs us to an
entry including a pattern index, needs to be reported or it
directs us to an entry with an $up$ or $left$ links hence,
 by following at most  two links we encounter an index needed to be
reported. Consequently, the time of following links is attributed
to the size of the output. \QED
\end{proof}

As the Lookup Query procedure can replace the intersection by range queries, which is executed $n(\beta - \alpha )$ times, Lemma \ref{l:queryLookTime} 
 proves the second part of Theorem~\ref{t:single}.

\commentout{
\section{Algorithm for Dictionary with $k>1$ gaps}\label{s:kgaps}
In this section we study the general case where each pattern of the dictionary may consist
of $k$ equally bounded gaps and, therefore, $k+1$ subpatterns. We use the ideas from Sect.~\ref{s:single}, however, we need to perform more checks in order to make them work also for the case that $k>1$.

Consider a pattern $P_i = P_{i,1} \phi_{\alpha}^{\beta}P_{i,2}\phi_{\alpha}^{\beta}P_{i,3}$. Then, $P_i$ occurs in the
text if both $P_{i,1}\phi_{\alpha}^{\beta}P_{i,2}$ and $P_{i,2}\phi_{\alpha}^{\beta}P_{i,3}$ appear in the text, while $P_{i,2}$
occurs at the same text location in both matchings. We, therefore, basically handle each gap separately in the same way we did in the previous
subsection, however, we need to verify that the found appearances of $P_{i,2}$ are indeed the same, otherwise, it may be the case that there is no legal appearance of $P_i$ even though each pair of subpatterns appear within a legal gap. For example, consider the pattern and text appearing in Figure \ref{f:2gapped matching}.
\begin{figure}[h]\label{f:2gapped matching}
\begin{center}
$P_i =$ $a\ b\ g\ \phi_2^8\ c\ d\ g\ \phi_2^8\ e\ f\ h$

 \vspace{0.3cm}
  $ T =  c\ \ d\ \ e\ \ f \ \   \overleftarrow{a\ \ b\ } \ e\ \ b\ \ \overrightarrow{ c\ \ d\ } \ a\ \
  c\ b\  \ e \ f \ h \ e \ a
  $ \\
 %\footnotesize{ $\ \  \qquad 1\ \  2\ \  3\ \  4\  \ 5\ \ \ 6\ \ 7\  \ 8\ \ 9\  10 \ 11 \
 %12$}

 \vspace{0.3cm}
  $ T =  c\ \ d\ \ e\ \ f \ \   a\ \ b\ \ e\ \ b\ \ \overleftarrow{ c\ \ d\ } \ a\ \
  c\ b\  \overrightarrow{\ e \ f \ h }\ e \ a
  $ \\
 %\footnotesize{ $\ \qquad \  \ \ 1\ \ \ 2\ \ \ 3\ \ \ 4\quad  5 \quad 6\ \ 7\  \ 8\ \ 9\  10 \ 11 \
 %12\ 13\ 14\ 15\ 16\ 17 \ 18\  $}

  \end{center}
\caption{Two stages of matching a pattern with two gaps}
\end{figure}

When applying the $Single\ Gap\ Dictionary$ algorithm, for $\ell=9$
we find the occurrence of $P_{i,2} = cd$, therefore, $set_2$
includes $\{i \}$ and for $gap =2$ we find in the suffix tree of
the reverse dictionary the matching of $P_{i,1}^R = ba$ so
$set_1$ also includes $\{i \}$. Due to the intersection process we
know that here is a partial matching of $P_i$. We save the intersection set of each loop instead of reporting a match. As
we continue, for $\ell =14$ we match $P_{i,3} = efh$ and going
backwards we match $P_{i,2}^R$, therefore, at the end of this loop
we also have the intersection set including $\{i \}$. After we finish
scanning the text we need to match the occurrences of partial
patterns.

Note, however, that we may mistakenly consider a match $P_{i,j}\ gap\ P_{i,j+h}$ for $h > 1$ as a legal
partial pattern appearance. In the example of figure \ref{f:2gapped matching}
we can see that for $\ell=14$ we find in the suffix tree of the reversed patterns an occurrence of $P_{i,1}$, hence the
intersection returns appearance of the illegal $P_{i,1}\ gap\ P_{i,3}$.

In order to avoid this we keep for each text location in the active range of size $\beta - \alpha+1$ a set of pair of indices, as follows. If a pair of consecutive subpatterns $P_{i,j}$, $P_{i,j+1}$ ends in text location $\ell$, we keep in $Set[\ell]$ the pair $\langle i, j+1\rangle$. If more than such subpattern $j$ exists for a dictionary pattern $P_i$, we keep only the maximal index value. This set is of size at most $d$, so we can keep it in a table of size $d$ enabling searches and updates in time $O(1)$. In the algorithm, when we detect an appearance of a consecutive subpatterns $P_{i,j}$, $P_{i,j+1}$ in location $\ell$, the appearance end at location $\ell'=\ell+|P_{i,j+1}|$, but we insert the pair $\langle i, j+1\rangle$ to $set[\ell']$ only if $j\in set[f]$, where $f$ is the index in the text in which the subpattern $P_{i,j}$ ends.

Note that, the total size of our prefixes and suffixes trees is less than $2|D|=O(|D|)$. This concludes the proof of Theorem~\ref{t:kgaps}.
}

\subsection{Splitting the Text} \label{ss:split}
Usually, the input text is very long and arrives on-line. This makes the query algorithm requirement to insert all suffixes of the text unreasonable. Transforming this algorithm into an online algorithm seems a difficult problem. The main difficulty is working with on-line suffix trees construction in a sliding window. While useful constructions based on Ukkonen's~\cite{Ukk:95} and McCgright's~\cite{McC:76} suffix trees constructions exist (see~\cite{jaip:03}), no such results are known for Weiner's suffix tree construction, which our reversed prefixes tree construction depends on.

Nevertheless, we do not need to know the whole text in advance and we can process only separate chunks of it each time. To do this we take $m=\beta-\alpha+\max_i\sum_j|P_{i,j}|$ and split the text twice to pieces of size $2m$: first starting form the beginning of the text and the second starting after $m$ symbols. We then apply the algorithms for a single gap or $k$-gaps for each of the pieces separately for both text splits. Note that any appearance of a dictionary pattern can still be found by the algorithms on the splitted text.

\section{Conclusions and Open Problems}\label{s:open}
We showed that combinatorial string methods other than Aho-Corasick automaton can be applied to the $DMG$ problem to yield efficient algorithms. In this paper we focused on solving $DMG$, where a single gap exists in all patterns in the dictionary. We also relaxed the problem so that all patterns in the dictionary have the same gap bounds. It is an interesting open problem to study the general problem without these relaxations.

\commentout{
\newpage
\appendix{\large \bf Appendix}

The omitted proofs appear below.

\subsubsection{Lemma~\ref{l:intgrid} restated.}
The intersection between the subpatterns appearing at location
$t_{\ell}$ and the reversed subpatterns ending at $t_{\ell-gap-1}$ can be
computed in time $O(occ + \sqrt {\log d}  \log ^2 \min \{ d, \log |D| \})$, where $occ$ is the number of patterns found. The preprocessing requires $O(|D| + d\log d)$ time and space.

\begin{proof}
The intersection of subpatterns occurrences can be computed by the intersection of labels on the trees paths, which can be reduced to the
problem of range queries on a grid. Using \cite{O-88} for a grid with $d$ points, each range query requires $O(occ + \sqrt {\log d })$, where $occ$ is the number of points within the range, thus, $occ$ is the number of patterns for which
both subpatterns appear in the text separated by a legal gap.

By~\cite{aklllr00}, the number of vertical
paths intersecting a path from the root to a certain node is
bounded by $\log |D|$, where $|D|$ is the number of nodes in the suffix tree. In our case, there are at most $d$
leaves in each of the suffix trees, therefore, there are
at most $d$ vertical paths intersecting a path from the root to
a certain node. Consequently, in our case there are at most
$\min \{d, \log |D|\}$ vertical paths intersecting a path from
the root to a certain node. As we need to perform a range query of
every vertical path from the path reaching node $g$ with every
vertical path from the path reaching node $h$, we perform up to
$\log ^2  \min \{ d, \log |D| \}$ range queries. All in all, we have $O(occ + \sqrt {\log d }\log ^2 \min \{ d, \log |D| \})$ time for finding the patterns occurring at a certain location of the text.

In the preprocessing we build two suffix trees, each in time
linear in the size of the dictionary, $|D|$. We decompose them into
vertical paths and mark the nodes representing subpatterns in
linear time in the size of the trees. The preprocessing of $d$
points on a grid for range queries using \cite{O-88} requires $O(d
\log d)$ time. Therefore, the preprocessing time is $O(d\log d + |D|)$. Note that this is also the space requirement. \QED
\end{proof}

\subsubsection{Lemma~\ref{l:interfilling} restated.}
\begin{displaymath}
inter[g,h]=
\bigcup \left\{ \begin{array}{l}
      $a link to $inter  [prev^*(g),prev^*(h)],\\
      i, $\textbf{if}  $   A_F[g]  = A_S[h] = i\\
    a $ link $ $to $ A_F[g] , $  \textbf{if} $  A_F[g]  = A_S[h'] =  i'$ and $ h' $ is maximal$\\
    a $ link $ $to $ A_S[h] ,$ \textbf{if} $ A_F[g']  = A_S[h] =  i''$ and $ g'$ is maximal$\\
   \end{array}
  \right.
\end{displaymath}

\begin{proof}
There are two cases for $inter[g,h]$, $1\leq g, h \leq d$:
\begin{enumerate}
\item If on both the path from the root of $T_{F^R}$ till the node
marked by $g$ and the path from the root of $T_S$ till the node marked by $h$ there are no marked
nodes, then the only option of intersection of the patterns represented by the marked nodes is that both nodes $g ,h$ represent subpatterns of the
same dictionary pattern $P_i$. Such an option is considered by the second element of the recursive rule.

\item Assume, without loss of generality, that the marks on the path from
the root of $T_{F^R}$ till the node marked by $g$ are $g'_1, g'_2,...g'_a$ and the marks on the path from the root
of $T_S$ till the node marked by $h$ are $h'_1, h'_2,...h'_b$. If there exists a marked node $g'_x$,
$1\leq x \leq a$ such that $g'_x$ represents $P_{i,1}$ and $h'_y$,
$1 \leq y \leq b$, represents $P_{i,2}$, that is $A_F[g'_x] =
A_S[h'_y] = i$, then $inter[g,h]$ has to include $i$. Nevertheless,
$inter[g'_x, h'_y]$ has to include also $i$. Due to the marking
system, $g'_x \leq g$ as $g'_x$ is on the path from
the root to $g$, and similarly $h'_y \leq h$. There are three possible options for this case.

\begin{enumerate}
\item if $g'_x < g $ and $h'_y < h$,
 when computing $inter[g,h]$ the entry $inter[g'_x, h'_y]$ was
already computed and contains pattern i. As $inter[g,h] \subseteq
inter[g'_x, h'_y]$  Hence, we can  save a link from $inter[g,h]$
to $inter[g'_x, h'_y]$ in order to retrieve such patterns. This
option is considered by the first element of the recursive rule.\\
Note that, if $inter[prev(g),
prev(h)]$ does not include an intersection of the subpatterns
represented by $g$ with some predecessor of $h$ and vice versa,
that cell contributes no new information with respect to previous
entries. In order to avoid passing through such entries while
performing a query, the first option of the recursive rule is $inter[prev^*(g), prev^*(h)]$ (and not $inter[prev(g),prev(h)]$).
 \item If $g'_x = g$ and  $h'_y < h$, it implies that the subpattern represented by node $g$ is
$P_{i, 1}$ and the subpattern represented by $h'_y$ is $P_{i,
2}$. Although there may be several such $h'_y$, if there are several subpatterns on the same path from the root to
node $h$ in $T_S$, it implies that all these $h'_y$ nodes
represent subpatterns such that one is the prefix of the other.
Therefore, saving the index of the pattern of the maximal (deepest) node, $h'_y$ is sufficient, since all other relevant
patterns can be reached by the $include$ links of the
maximal $h'_y$. Such an option is considered by the third element
of the recursive rule.
 \item If $h'_y = h$ and $g'_x < g$, the argument
is similar to the former case. Such an option is considered by the
fourth element of the recursive rule.
\end{enumerate}
 \end{enumerate}
 \QED
\end{proof}

\subsubsection{Lemma~\ref{l:preLook} restated.}
Preprocessing to build the $inter$ table requires $O(|D| + d ^ 2 ovr)$
time, where $ovr$ is the maximal number of subpatterns including
each other as a prefix or as a suffix.

\begin{proof}
The preprocess requires constructing both trees and marking them
in BFS order all in $O(|D|)$. Forming the $include$ links may take
up to $d^2$ in case many patterns share the same subpattern.
Filling each of the $d^2$ entries of the table $inter[g, h]$ can
be performed in $O(1 + ovr)$ time, as it involves the following
operations.
\begin{enumerate}
\item Check whether $inter[prev(g), prev(h)]$ includes  any of the
last three elements of the recursive rule. If it does,  link the
current entry  to $inter[prev(g), prev(h)]$. If it does not, link
the current entry to the entry that $inter[prev(g), prev(h)]$ is
connected to, which is $inter[prev^*(g), prev^*(h)]$.
 \item Compare $A_F[g]$  to $ A_S[h]$ to detect a new pattern
appearance.
\item Starting from node $prev(h)$ in $S_S$ check in
$O(1)$ if $g$ and $h$ originate in the same pattern. If they do,
add a link to $ A_S[prev(h)]$. If they don't, replace $h$ with
$prev(h)$ and continue. This procedure can repeat as many times as
the number of marked nodes on the path from the root of $T_S$ to
node $h$, implying, the number of subpatterns including each other
as suffixes, which we denote by $ovr$.
\item A similar procedure is performed starting with node $prev(g)$, and searching
intersection with $h$.
\end{enumerate}
Note, that the union of these elements is done in $O(1)$ as each
computation possibility is applied to different sections of the
tree paths therefore, yields distinct answers.
\QED
\end{proof}

\subsubsection{Lemma~\ref{l:queryLook} restated.}
Having the $inter$ table, the intersection between the subpatterns appearing at location
$t_{\ell}$ and the reversed subpatterns ending at $t_{\ell-gap-1}$
can be computed in time $O(occ)$, where $occ$ is the number of
patterns found.

\begin{proof}
Having the $inter$ table, a query of intersection of origin pattern of
subpatterns represented by nodes $g, h$  can be answered by
consulting entry $[g,h]$ of the table. In order to output all
relevant patterns, we need to follow the four elements saved
there. If a pattern $i$ is saved there, we output it. If a link to
$A_S[h']$ exists, for $h'$ a predecessor of $h$ we output its
origin pattern and recursively follow the $include$ link emanating
from that array cell and report the origin pattern of the new
subpattern. If a link to $A_{F^R}[g']$ exists, for $g'$ a
predecessor of $g$, we similarly follow its $include$ links and
output the new patterns encountered.

In addition, we follow the link to $inter[prev^*(g), prev^*(h)]$ and
recursively output the information saved there, i.e. a pattern
consisting of the subpatterns represented by $prev^*(g),
prev^*(h)$, possible two lists of $include$ links and a link to
a former entry of the table.

Note, that every step of following any sort of link in the table, during a query processing, implies that another pattern is reported, as
the $include$ links connect two subpatterns including one another, where both should be reported  and the link to former entry is directed to
$inter[prev^*(g), prev^*(h)]$, that by definition includes two nodes representing the two parts of a certain pattern, where one of them
is bound to be $ prev^*(g)$ or $ prev^*(h)$. Consequently, the time of following links is attributed to the size of the output.
\QED
\end{proof}
}
\end{document}